\title{The Adversary Bound Revisited: From Optimal Query Algorithms to Optimal Control}
\author{Duyal Yolcu\thanks{\href{https://github.com/qudent}{https://github.com/qudent}}}
\begin{document}
\maketitle
\newtheorem{definition}{Definition}
\newtheorem{proposition}{Proposition}
\newtheorem{thm}{Theorem}
\begin{abstract}
This note complements the paper "One-Way Ticket to Las Vegas and the Quantum Adversary" \cite{LasVegas}. I develop the ideas behind the adversary bound - universal algorithm duality therein in a different form, using the same perspective as \cite{BarnumQuantum, Barnum2007Semidefinite} in which query algorithms are defined as sequences of feasible reduced density matrices rather than sequences of unitaries. This form may be faster to understand for a general quantum information audience: It avoids defining the "unidirectional relative $\gamma_{2}$-bound" and relating it to query algorithms explicitly. It is also more general because the lower bound (and universal query algorithm) apply to a class of time-optimal control problems rather than just query problems. That is in addition to advantages to be discussed in \cite{LasVegas}, namely the more elementary algorithm and correctness proof that avoids phase estimation and spectral analysis, allows for limited treatment of noise, and improves the runtime by another $\Theta(\log(1/\varepsilon))$ compared to \cite{Lee2011Quantum,belovs2015variations}.

The approach - not new - starts with considering an optimal \textit{query} problem for state conversion - in which we are given an unknown oracle $\displaystyle L_{a}$ for $\displaystyle a\in A$, and want to construct an algorithm that transforms initial states $\displaystyle \ket{\xi _{a}}$ to acceptable target states $\displaystyle \ket{\tau _{a}}$ by invoking that oracle - as an optimal \textit{control }problem, in which $\displaystyle a\in A$ is stored as a quantum "input state" in another Hilbert space $\displaystyle \mathcal{A}$, and we want to transform $\displaystyle \ket{\xi } =\sum _{a\in A}\ket{a} \otimes \ket{\xi _{a}}$ to $\displaystyle \ket{\tau } =\sum _{a\in A}\ket{a} \otimes \ket{\tau _{a}}$ by invoking $\displaystyle L=\sum _{a\in A}\ket{a}\bra{a} \otimes L_{a}$ without being allowed to access $\displaystyle \mathcal{A}$ directly. Then we track feasible \textit{reduced density matrices }on $\displaystyle \mathcal{A}$ (which correspond to the transpose of the Gram matrices in the adversary bound literature). This information is sufficient to track the system's state by standard facts on purifications and their unitary equivalence. This is close to \cite{BarnumQuantum, Barnum2007Semidefinite}.

The adversary bound is the inverse maximal speed we can achieve in reduced density matrix space from any starting RDM, travelling in the desired direction. We describe a new universal control algorithm that matches this speed up to an error-dependent factor by slightly perturbing initial and target state by that ideal starting RDM; by a linearity argument, an algorithm going along a straight line between these perturbed states is feasible. We can then bound the error in the final state if we apply that algorithm to the original initial state, rather than the perturbed one.

Importantly, this approach doesn't assume that $\displaystyle L$ is "read-only", i.e. block-diagonal in $\displaystyle \mathcal{A}$, anymore - as long as there is still an "idle subspace" as defined in the note. We can therefore apply it to problems in which the register $\displaystyle \mathcal{A}$ is to be manipulated, rather than just read out. The argument also works with problems in which $\displaystyle L$ is only subunitary, i.e. may correspond to noise occuring and the algorithm "giving up" in some instances.

To follow this text, one only needs basic knowledge in quantum physics including reduced density operators, purifications and their local equivalence. Therefore, I hope it has expository value for a general quantum information audience that wants to understand adversary bound - universal query algorithm dualities.
\end{abstract}
\section*{Acknowledgments}
I thank Alexander Belov (Aleksandrs Belovs) and Berare Göktürk for helpful discussions while writing the draft.
\tableofcontents
\section{The problem - discrete time}
\label{sec:problem}
We consider a tripartite quantum system $\mathcal{ABC} ,$ assumed to be finite-dimensional for convenience. We want to construct an algorithm that transforms an initial state $\displaystyle \ket{\xi }\neq 0 \in \mathcal{ABC}$ to a target state $\displaystyle \ket{\tau } \in \mathcal{ABC}$ (or similar, e.g. allowing a range of acceptable target states) as fast as possible. The catch is that the algorithm is not allowed to arbitrarily act on $\displaystyle \mathcal{A}$ - instead, there is a fixed subunitary interaction operator $\displaystyle L$ (i.e. fulfilling $\displaystyle \left\Vert L\ket{\varphi }\right\Vert \leq \left\Vert \ket{\varphi }\right\Vert$ for any $\ket{\varphi}$) that acts on $\displaystyle \mathcal{AB}$ (not $\mathcal{C}$) once per timestep. On the other hand, we are allowed to apply arbitrarily complex unitaries on $\displaystyle \mathcal{BC}$ at any time, without cost. We also assume that $\displaystyle \mathcal{C}$ is "as large as the algorithm could need it to be"; in the proof of Proposition \ref{prop:densitymatrixsequences_equal_queryalgorithms}, we'll see that  $\displaystyle \dim\mathcal{C} \geq \dim(\mathcal{AB})$ works for any algorithm. We optimize over the number of timesteps (or, conversely, lower-bound the number of timesteps necessary).

Finally, we fix a designated normalized state $\displaystyle \ket{\mathrm{idle}} \in \mathcal{B}$. We assume that $\displaystyle L$ acts trivially on $\displaystyle \ket{\mathrm{idle}}$, i.e. $\displaystyle L\left(\ket{\varphi }\ket{\mathrm{idle}}\right) =\ket{\varphi }\ket{\mathrm{idle}}$ for all $\displaystyle \ket{\varphi } \in \mathcal{A} .$ This is equivalent to assuming that $\displaystyle L$ can be applied in a "controlled" way, and the algorithm can always choose to do nothing.

Allowing subunitary, rather than only unitary, $\displaystyle L$ allows a limited discussion of \textbf{noise:} If the true transformation a physical system undergoes involves a noisy quantum channel, one can choose $\displaystyle L$ to be one Kraus operator of that quantum channel and consider it the "successful" Kraus operator. The other Kraus operators can be considered "errors", and one stops tracking the computation in case one of them is applied. Then the norm of the state will decay over time - corresponding to losing probability mass in case an error occurs - but one can still define sets of acceptable, sub-normalized target states, and achieving them with that $\displaystyle L$ is sufficient to solve the entire problem. However, it is necessary for our universal algorithm that $\displaystyle L$ preserves the norm when acting on $\displaystyle \ket{\mathrm{idle}} .$

As mentioned in the abstract, we can encode a query complexity problem by choosing $\displaystyle L$ block-diagonal. For a more thorough introduction to query problems, see \cite{belovs2015variations}.
\section{Quantum algorithms as sequences of feasible reduced density matrices on AB}
\label{sec:quantum-feasible-rdms}
We now define quantum algorithms in terms of the feasible intermediate reduced density matrices on $\displaystyle \mathcal{AB}$. This section uses essentially the same ideas as \cite{BarnumQuantum, Barnum2007Semidefinite} (aside from pointing out that they apply to a wider class of $\displaystyle L$).
\begin{definition}\label{def:QuantumAlgRDMSequence}
In our discussion, a T-step \textbf{quantum algorithm }is a list of positive semidefinite operators $\displaystyle \left( \pi ^{0} ,\pi ^{1} ,\dotsc ,\pi ^{T-1}\right)$ representing (non-normalized) reduced density matrices on $\displaystyle \mathcal{AB}$ such that for all $\displaystyle j$,
\begin{equation}\label{eq:consistency_density_matrix_list}
\mathrm{tr}_{\mathcal{B}} \pi ^{j+1} =\mathrm{tr}_{\mathcal{B}}\left( L \pi ^{j} L^{\dagger }\right) .
\end{equation}
\end{definition}
\begin{proposition}\label{prop:densitymatrixsequences_equal_queryalgorithms}
In the model of Section \ref{sec:problem}, it is possible to transform $\displaystyle \ket{\xi } \mapsto \ket{\tau }$ in $\displaystyle T$ timesteps iff such a list exists with $\displaystyle \mathrm{tr}_{\mathcal{BC}}\ket{\xi }\bra{\xi } =\mathrm{tr}_{\mathcal{B}} \pi ^{0}$ and $\displaystyle \mathrm{tr}_{\mathcal{BC}}\ket{\tau }\bra{\tau } =\mathrm{tr}_{\mathcal{B}}\left( L \pi ^{T-1} L^{\dagger }\right) .$
\end{proposition}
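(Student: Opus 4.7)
The plan is to prove the two directions separately, with the easy direction establishing the notation that will be reused for the harder direction.

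For the forward direction, suppose we have an algorithm, i.e. a sequence of unitaries $U_0,U_1,\dots,U_T$ on $\mathcal{BC}$ interleaved with the subunitary $L$ on $\mathcal{AB}$, producing intermediate states $\ket{\psi^j}=U_j L U_{j-1}L\cdots L U_0\ket{\xi}$ (with $\ket{\psi^0}=U_0\ket{\xi}$ the state right before the first application of $L$, and $U_T L\ket{\psi^{T-1}}=\ket{\tau}$). I define $\pi^j:=\mathrm{tr}_{\mathcal{C}}\ket{\psi^j}\bra{\psi^j}$. Consistency follows because $U_{j+1}$ acts on $\mathcal{BC}$ and therefore leaves $\mathrm{tr}_{\mathcal{BC}}$ invariant: $\mathrm{tr}_{\mathcal{B}}\pi^{j+1}=\mathrm{tr}_{\mathcal{BC}}\ket{\psi^{j+1}}\bra{\psi^{j+1}}=\mathrm{tr}_{\mathcal{BC}}(L\ket{\psi^j}\bra{\psi^j}L^\dagger)=\mathrm{tr}_{\mathcal{B}}(L\pi^j L^\dagger)$, using that $L$ acts on $\mathcal{AB}$ to pull the $\mathrm{tr}_\mathcal{C}$ inside. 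The boundary conditions fall out of the same observation applied to $U_0$ and $U_T$.

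For the reverse direction, suppose we have the sequence $(\pi^0,\dots,\pi^{T-1})$. The key fact I will use is the unitary equivalence of purifications: if $\ket{\phi_1},\ket{\phi_2}\in\mathcal{H}_1\otimes\mathcal{H}_2$ (not necessarily normalized) satisfy $\mathrm{tr}_{\mathcal{H}_2}\ket{\phi_1}\bra{\phi_1}=\mathrm{tr}_{\mathcal{H}_2}\ket{\phi_2}\bra{\phi_2}$, then there is a unitary $V$ on $\mathcal{H}_2$ with $(I\otimes V)\ket{\phi_1}=\ket{\phi_2}$, provided $\mathcal{H}_2$ is large enough to support a purification (which is why one needs $\dim\mathcal{C}\geq\dim(\mathcal{AB})$, so that $\pi^j$ can be purified on $\mathcal{ABC}$ from $\mathcal{A}$ with purifying system $\mathcal{BC}$). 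I then induct: pick any purification $\ket{\psi^0}\in\mathcal{ABC}$ of $\pi^0$ (viewed as a state of $\mathcal{AB}$ purified into $\mathcal{C}$, which is automatically a purification of $\mathrm{tr}_\mathcal{B}\pi^0$ from $\mathcal{A}$ into $\mathcal{BC}$); the first boundary condition plus the above fact yields $U_0$ on $\mathcal{BC}$ with $U_0\ket{\xi}=\ket{\psi^0}$. Given $\ket{\psi^j}$, the vector $L\ket{\psi^j}$ has $\mathcal{A}$-RDM equal to $\mathrm{tr}_\mathcal{B}(L\pi^j L^\dagger)=\mathrm{tr}_\mathcal{B}\pi^{j+1}$ by the consistency condition, which agrees with the $\mathcal{A}$-RDM of any purification $\ket{\psi^{j+1}}$ of $\pi^{j+1}$; the fact gives $U_{j+1}$ on $\mathcal{BC}$ mapping $L\ket{\psi^j}$ to $\ket{\psi^{j+1}}$. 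The final step uses the second boundary condition identically to produce $U_T$ with $U_T L\ket{\psi^{T-1}}=\ket{\tau}$.

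The only subtlety worth watching is the subunitary case: $L\ket{\psi^j}$ may have smaller norm than the normalized version, but the purification-equivalence fact doesn't require normalization, only equality of the two reduced operators, which is exactly what the consistency equation supplies (and which forces the norms to match since the trace of the $\mathcal{A}$-RDM equals the squared norm). The idle state $\ket{\mathrm{idle}}$ plays no role in this equivalence; it will only matter later when analyzing speed and constructing universal algorithms. The main "obstacle", if any, is just being careful that the space $\mathcal{BC}$ is sufficiently large to accommodate all the purifications simultaneously; once $\dim\mathcal{C}\geq\dim(\mathcal{AB})$ is assumed, everything proceeds mechanically.
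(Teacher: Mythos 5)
Your proposal is correct and follows essentially the same route as the paper: the forward direction tracks the reduced density matrix on $\mathcal{A}$, which is invariant under the free unitaries on $\mathcal{BC}$, and the reverse direction chains together purifications of the $\pi^j$ using the local-unitary equivalence of purifications on $\mathcal{BC}$. Your explicit inductive bookkeeping of the connecting unitaries $U_j$ and the remark that purification equivalence needs only equality of reduced operators (not normalization, which handles the subunitary case) are slightly more detailed than the paper's phrasing but not a different argument.
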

\begin{proof}
First suppose that such a transformation is possible and let $\displaystyle \ket{\Phi ^{j}}\in\mathcal{ABC}$ be the system's state directly before the $\displaystyle j+1$th application of $\displaystyle L$ (counting from $\displaystyle 1$). Then set $\displaystyle \pi ^{j} := \mathrm{tr}_{\mathcal{C}}\ket{\Phi ^{j}}\bra{\Phi ^{j}}.$ As reduced density operators of a nonnormalized state, these are positive semidefinite. Directly after the $\displaystyle j+1$th application of $\displaystyle L,$ the reduced density matrix on $\displaystyle \mathcal{A}$ is $\displaystyle \mathrm{tr}_{\mathcal{B}}\left( L \pi ^{j} L^{\dagger }\right)$ by standard quantum physics. Similarly, directly before the $\displaystyle j+2$nd application, the RDM is $\displaystyle \mathrm{tr}_{\mathcal{B}}\left( \pi ^{j+1}\right) .$ But these matrices must be equal because between the $\displaystyle j+1$th and the $\displaystyle j+2$nd application, the quantum computer may only act on $\displaystyle \mathcal{BC} ,$ and not on $\displaystyle \mathcal{A} .$

Conversely, suppose there is an algorithm as in Definition \ref{def:QuantumAlgRDMSequence} and consider a sequence of purifications of the $\displaystyle \pi ^{j}$ on $\displaystyle \mathcal{C} ,$ i.e. $\displaystyle \ket{\Phi ^{j}} \in \mathcal{ABC}$ such that $\displaystyle \mathrm{tr}_{\mathcal{C}}\ket{\Phi ^{j}}\bra{\Phi ^{j}} =\pi ^{j} .$ By standard quantum physics again, these always exist if $\displaystyle \dim\mathcal{AB} \leq \dim\mathcal{C}$ \cite{nielsen2002quantum}. Then by Equations \ref{eq:consistency_density_matrix_list}, $\displaystyle ( L\otimes I_{\mathcal{C}})\ket{\Phi ^{j}}$ and $\displaystyle \ket{\Phi ^{j+1}}$ are purifications of the same reduced density matrix on $\displaystyle \mathcal{A}.$ The same is true for $\displaystyle \ket{\Phi ^{0}}$ and $\displaystyle \ket{\xi } ,$ as well as $\displaystyle ( L\otimes I_{\mathcal{C}})\ket{\Phi ^{T-1}}$ and $\displaystyle \ket{\tau } .$ As all such purifications are related by local unitaries (i.e. unitaries acting only on $\displaystyle \mathcal{BC}$) \cite{nielsen2002quantum, OzolsUnitaryEquivalence}, a valid quantum algorithm exists that starts with $\displaystyle \ket{\xi }$ and applies these connecting unitaries between applications of $\displaystyle L.$
\end{proof}
As promised in the abstract, this argument essentially works by tracking the reduced density matrix on $\displaystyle \mathcal{A} .$ This set of lists of operators is also a convex set in a natural way, which gives rise to nice properties - see \cite{LasVegas} for details.
\section{Adversary bound}
Now consider a T-step quantum query algorithm transforming $\displaystyle \ket{\xi } \mapsto \ket{\tau }$ and consider the sum of all $\displaystyle \pi ^{j}$,
\begin{equation}
\overline{\pi } :=\sum _{j=0}^{T-1} \pi ^{j} \in \mathbb{S}\mathcal{_{\mathcal{AB}}} ,
\end{equation}
where $\displaystyle \mathbb{S}_{\mathcal{AB}}$ denotes the set of positive semidefinite operators on $\displaystyle \mathcal{AB} .$ Then\begin{gather}
\mathrm{tr_{\mathcal{B}}\left( L\overline{\pi } L^{\dagger }\right) =}\sum _{j=0}^{T-1}\mathrm{tr}_{\mathcal{B}}\left( L\pi ^{j} L^{\dagger }\right) =\sum _{j=0}^{T-2}\mathrm{tr}_{\mathcal{B}} \pi ^{j+1} +\mathrm{tr}_{\mathcal{BC}}\left(\ket{\tau }\bra{\tau }\right)\notag\\
=\mathrm{tr_{\mathcal{B}}\overline{\pi } +tr}_{\mathcal{BC}}\left(\ket{\tau }\bra{\tau } -\ket{\xi }\bra{\xi }\right) . \label{eq:evol_pibar}
\end{gather}
Furthermore,
\begin{equation}
\mathrm{tr}(\overline{\pi }) \leq \ T\ \bra{\xi }\ket{\xi}:\label{eq:boundtrpibar}
\end{equation}
If we turn the sequence of $\displaystyle \pi ^{j}$ into a quantum algorithm involving a sequence of $\displaystyle \ket{\Phi ^{j}}$ as in \ref{sec:quantum-feasible-rdms}, $\displaystyle \mathrm{tr}\left( \pi ^{j}\right) =\left\Vert \ket{\Phi ^{j}}\right\Vert ^{2} \leq \left\Vert \ket{\xi }\right\Vert ^{2}$ by subunitarity of $\displaystyle L$; Inequality \ref{eq:boundtrpibar} is the result of adding these inequalities.

If a $\displaystyle T$-query algorithm exists, \textit{some }$\displaystyle \overline{\pi }$ must exist, which yields the following bound:\footnote{The motivation for this term is clearer in other expositions, such as Childs's lecture notes \cite{childs2017lecture}.}
\begin{definition}[Adversary bound]\label{def:adversary}The \textbf{\textit{adversary bound}} of a state conversion problem, denoted $\displaystyle \mathrm{Adv}\left(\ket{\xi }\rightarrow \ket{\tau }\right)$ with $\ket{\xi}\neq 0$, is the optimal value of the minimization problem (which we call the \textbf{primal problem})
\begin{align}\label{eq:primal_start}
\mathrm{minimize} & \ \mathrm{tr} \ (\overline{\pi }) /\bra{\xi }\ket{\xi }\\
\mathrm{subject\ to} & \ \mathrm{tr}_{\mathcal{B}}\left( L\overline{\pi } L^{\dagger } -\overline{\pi }\right) =\mathrm{tr}_{\mathcal{BC}}\left(\ket{\tau }\bra{\tau } -\ket{\xi }\bra{\xi }\right) \label{eq:primal_filter},\\
 & \ \overline{\pi } \in \mathbb{S}_{\mathcal{AB}} .\label{eq:primal_end}
\end{align}
By the discussion above, $\displaystyle \mathrm{Adv}\left(\ket{\xi }\rightarrow \ket{\tau }\right)$ lower-bounds the number of queries of quantum query algorithms solving the state conversion problem exactly.
\end{definition}
The inverse of this problem's optimal solution is also the answer to the question "In the space of reduced density matrices on $\displaystyle \mathcal{A} ,$ what is the maximum fraction of the desired change $\displaystyle \mathrm{tr}_{\mathcal{BC}}\left(\ket{\tau }\bra{\tau } -\ket{\xi }\bra{\xi }\right)$ achievable, starting from \textit{any} state with the correct normalization?"

Subsection \ref{sec:control-to-query} discusses a strengthening relevant for query problems, omitted here to reduce technicality.

The following remark uses semidefinite programming duality; the result isn't necessary for the remainder of the discussion, and a reader unfamiliar with the technique may take it on faith. The optimal value of the problem is lower-bounded by the optimal value of the maximization problem (the dual problem)
\begin{align}\label{eq:dual_start}
\mathrm{maximize} & \left(\bra{\tau } \Gamma \otimes I_{\mathcal{BC}}\ket{\tau } -\bra{\xi } \Gamma \otimes I_{\mathcal{BC}}\ket{\xi }\right) /\bra{\xi }\ket{\xi }\\
\mathrm{subject\ to} & \ L^{\dagger }( \Gamma \otimes I_{\mathcal{B}}) L-\Gamma \otimes I_{\mathcal{B}} \preceq I_{\mathcal{AB}} ,\\
 & \ \Gamma \in \mathbb{H}_{\mathcal{A}} ,\label{eq:dual_end}
\end{align}
where $\displaystyle \mathbb{H}_{\mathcal{A}}$ denotes the space of Hermitian matrices on $\displaystyle \mathcal{A}$. This means that finding any feasible $\displaystyle \Gamma $ for this problem corresponds to a proof that no algorithm can be faster - which is more convenient for finding lower bounds on the number of steps necessary for a conversion. We can see that \textit{Slater's} \textit{strong duality condition }is fulfilled by choosing $\displaystyle \Gamma =0$ in the dual problem. This means that the best solution to Problem \ref{eq:dual_start}-\ref{eq:dual_end} results in a value equal to the best solution of Problem \ref{eq:primal_start}-\ref{eq:primal_end}.
\section{Matching the lower bound by a universal algorithm}
Now assume we have some feasible solution $\displaystyle \overline{\pi }$ of the optimization problem in Definition \ref{def:adversary}, which doesn't have to be optimal. Can we "turn it around" and obtain an algorithm to transform $\displaystyle \ket{\xi }\rightarrow \ket{\tau }$ in $\displaystyle \mathrm{tr} \ \overline{\pi }$ steps? The answer will turn out to be "almost".
\begin{proposition}
Using the notations above, for any integer $\displaystyle T' >0$, the sequence of $\displaystyle \pi ^{j}$
\begin{align*}
\left( \pi ^{j}\right)_{0\leq j< T'} & =\left(\left(\frac{T'-j}{T'} \ \mathrm{tr}_{\mathcal{BC}}\ket{\xi }\bra{\xi } +\frac{j}{T'} \ \mathrm{tr}_{\mathcal{BC}}\ket{\tau }\bra{\tau }\right) \otimes \ket{\mathrm{idle}}\bra{\mathrm{idle}} +\frac{\overline{\pi }}{T'}\right)_{0\leq j< T'}
\end{align*}
constitutes a $\displaystyle T'$-query quantum query algorithm solving the state conversion problem $\displaystyle \ket{\xi } \otimes \ket{0} +\frac{\ket{v}}{\sqrt{T'}} \otimes \ket{1} \mapsto \ket{\tau } \otimes \ket{0} +\frac{\ket{v}}{\sqrt{T'}} \otimes \ket{1} ,$ where $\displaystyle \ket{v}$ is a purification of $\displaystyle \overline{\pi }$ on $\displaystyle \mathcal{C}$ (and we add a qubit to the ancilla space).
\end{proposition}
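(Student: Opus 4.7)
My plan is to apply Proposition \ref{prop:densitymatrixsequences_equal_queryalgorithms} directly: it suffices to check that the given list $(\pi^j)_{0 \le j < T'}$ is a valid $T'$-step quantum algorithm in the sense of Definition \ref{def:QuantumAlgRDMSequence}, and that its boundary RDMs on $\mathcal{A}$ match those of the perturbed initial state $\ket{\xi}\ket{0} + T'^{-1/2}\ket{v}\ket{1}$ and perturbed target state $\ket{\tau}\ket{0} + T'^{-1/2}\ket{v}\ket{1}$. Positive semidefiniteness of each $\pi^j$ is immediate, since every summand in its definition is PSD.

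For the consistency relation $\mathrm{tr}_\mathcal{B}\pi^{j+1} = \mathrm{tr}_\mathcal{B}(L\pi^j L^\dagger)$, I would use two facts. First, because $L$ acts as the identity on $\mathcal{A}\otimes\{\ket{\mathrm{idle}}\}$, one has $L(\rho\otimes\ket{\mathrm{idle}}\bra{\mathrm{idle}})L^\dagger = \rho\otimes\ket{\mathrm{idle}}\bra{\mathrm{idle}}$ for every $\rho \in \mathbb{S}_\mathcal{A}$, and then $\mathrm{tr}_\mathcal{B}$ simply returns $\rho$. So the ``interpolation part'' of $\pi^j$ survives $L\cdot L^\dagger$ followed by $\mathrm{tr}_\mathcal{B}$ intact. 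Second, feasibility of $\overline{\pi}$ for the adversary problem gives, via equation \ref{eq:evol_pibar}, $\mathrm{tr}_\mathcal{B}(L\overline{\pi}L^\dagger) = \mathrm{tr}_\mathcal{B}\overline{\pi} + \mathrm{tr}_{\mathcal{BC}}(\ket{\tau}\bra{\tau} - \ket{\xi}\bra{\xi})$. Combining these, $\mathrm{tr}_\mathcal{B}(L\pi^j L^\dagger)$ differs from $\mathrm{tr}_\mathcal{B}\pi^j$ by exactly $(1/T')\,\mathrm{tr}_{\mathcal{BC}}(\ket{\tau}\bra{\tau} - \ket{\xi}\bra{\xi})$, which is precisely the linear increment from $\mathrm{tr}_\mathcal{B}\pi^j$ to $\mathrm{tr}_\mathcal{B}\pi^{j+1}$ dictated by the interpolation formula. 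That is the heart of the argument.

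For the two boundary conditions, the extra qubit lives in the ancilla $\mathcal{C}$, so the cross terms in $(\ket{\xi}\ket{0} + T'^{-1/2}\ket{v}\ket{1})(\cdot)^\dagger$ vanish upon tracing out $\mathcal{BC}$ together with the qubit; since $\ket{v}$ purifies $\overline{\pi}$ on $\mathcal{C}$, $\mathrm{tr}_{\mathcal{BC}}\ket{v}\bra{v} = \mathrm{tr}_\mathcal{B}\overline{\pi}$. The result matches $\mathrm{tr}_\mathcal{B}\pi^0 = \mathrm{tr}_{\mathcal{BC}}\ket{\xi}\bra{\xi} + T'^{-1}\mathrm{tr}_\mathcal{B}\overline{\pi}$; the analogous bookkeeping at $j = T'-1$, invoking the identity above once more, matches the target RDM. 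There is no substantive obstacle here, as the entire proof reduces to a linear-algebraic check in which \ref{eq:evol_pibar} does the essential work. The only place that requires care is the index arithmetic: making sure that the $\mathrm{tr}_{\mathcal{BC}}(\ket{\tau}\bra{\tau} - \ket{\xi}\bra{\xi})$ contribution produced by $L\overline{\pi}L^\dagger/T'$ combines with the linear interpolation coefficients to yield exactly the $1/T'$-step increment, with no off-by-one error at either endpoint.
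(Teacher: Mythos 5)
Your proposal is correct and follows essentially the same route as the paper's own proof: verify positive semidefiniteness, reduce the consistency condition of Definition \ref{def:QuantumAlgRDMSequence} to the feasibility constraint on $\overline{\pi}$ using that $L$ acts trivially on $\ket{\mathrm{idle}}$, check the two boundary reduced density matrices, and invoke Proposition \ref{prop:densitymatrixsequences_equal_queryalgorithms}. The only cosmetic point is that the identity you attribute to Equation \ref{eq:evol_pibar} is, for a general feasible $\overline{\pi}$, really the constraint of Equation \ref{eq:primal_filter} (the two express the same relation).
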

Intuitively, this algorithm works by using a scaled down version of $\overline{\pi}$ to go from initial to target density matrix, along a shifted straight line in the space of reduced density matrices --- by Equation \ref{eq:primal_filter}, $\overline{\pi}$ is able to induce exactly the change in the desired direction.
\begin{proof}
We first show that $\left( \pi ^{j}\right)_{0\leq j< T'} $ is a quantum query algorithm in the sense of Definition \ref{def:QuantumAlgRDMSequence}. This is equivalent to the conditions that
\begin{itemize}
\item Each $\pi^j\succeq 0$: This follows from the facts that density matrices are positive semidefinite and closed under convex mixtures.
    \item Equation \ref{eq:consistency_density_matrix_list} holds for $j=0,\ldots,T-1$. Plugging in our $\pi^j$, we need to show that
\begin{multline*}
        \mathrm{tr}_{\mathcal{B}}\left(\left(\frac{T'-j-1}{T'} \ \mathrm{tr}_{\mathcal{BC}}\ket{\xi }\bra{\xi } +\frac{j+1}{T'} \ \mathrm{tr}_{\mathcal{BC}}\ket{\tau }\bra{\tau }\right) \otimes \ket{\mathrm{idle}}\bra{\mathrm{idle}} +\frac{\overline{\pi }}{T'}\right)\\
        = \mathrm{tr}_{\mathcal{B}}\left( L \left(\left(\frac{T'-j}{T'} \ \mathrm{tr}_{\mathcal{BC}}\ket{\xi }\bra{\xi } +\frac{j}{T'} \ \mathrm{tr}_{\mathcal{BC}}\ket{\tau }\bra{\tau }\right) \otimes \ket{\mathrm{idle}}\bra{\mathrm{idle}} +\frac{\overline{\pi }}{T'}\right) L^{\dagger }\right) 
    \end{multline*}
    for these $j$. We know that $L$ acts trivially on $\ket{\mathrm{idle}}$, so the condition is equivalent to
    \begin{multline*}
        \left(\frac{T'-j-1}{T'} \ \mathrm{tr}_{\mathcal{BC}}\ket{\xi }\bra{\xi } +\frac{j+1}{T'} \ \mathrm{tr}_{\mathcal{BC}}\ket{\tau }\bra{\tau }\right) + \mathrm{tr}_{\mathcal{B}}\left(\frac{\overline{\pi }}{T'}\right)\\
        = \left(\frac{T'-j}{T'} \ \mathrm{tr}_{\mathcal{BC}}\ket{\xi }\bra{\xi } +\frac{j}{T'} \ \mathrm{tr}_{\mathcal{BC}}\ket{\tau }\bra{\tau }\right) + \mathrm{tr}_{\mathcal{B}}\left( L \frac{\overline{\pi }}{T'} L^{\dagger }\right).
    \end{multline*}
    Rearranging the terms, we transform this condition into
    \begin{equation*}
        \frac{1}{T'} \ \mathrm{tr}_{\mathcal{BC}} \left(\ket{\tau }\bra{\tau } - \ket{\xi }\bra{\xi } \right) 
        = \frac{1}{T'}\mathrm{tr}_{\mathcal{B}}\left( L \overline{\pi }L^{\dagger } - \overline{\pi}\right).
    \end{equation*}
    Up to a factor, this is exactly the condition of Equation \ref{eq:primal_filter}, and we are guaranteed it is fulfilled for a feasible $\overline{\pi}$.
\end{itemize}
Similarly, we may show that
\begin{align}
    \mathrm{tr}_{\mathcal{B}} \left( \pi^0 \right) &= \mathrm{tr}_{\mathcal{BC}}\ket{\xi }\bra{\xi } + \frac{\mathrm{tr}_{\mathcal{B}}\left(\overline{\pi }\right)}{T'},\\
    \mathrm{tr}_{\mathcal{B}} \left( L \pi^{T'-1} L^\dagger\right) &= \mathrm{tr}_{\mathcal{BC}}\ket{\tau }\bra{\tau } + \frac{\mathrm{tr}_{\mathcal{B}}\left(\overline{\pi }\right)}{T'},
\end{align}
which equals the reduced density matrices of the claimed initial and final states. By Proposition \ref{prop:densitymatrixsequences_equal_queryalgorithms}, this implies that the claimed transformation is indeed possible.
\end{proof}
As $\displaystyle T'\rightarrow \infty $, initial and target states converge to our desired $\displaystyle \ket{\xi } ,\ket{\tau }$ (apart from having redefined the ancilla space. Let $E\colon\mathcal{ABC}\to\mathcal{ABC}$ be the total effective evolution operator applied by the algorithm; this operator must be subunitary. If we apply the algorithm to our true initial state $\displaystyle \ket{\xi } \otimes \ket{0} ,$ and project to the $\displaystyle \ket{0}$ subspace in the end (i.e. measures that qubit and outputs "failure" in case the result is $\displaystyle 1$, generally reducing the norm of the state), the resulting state fulfills
\begin{gather}
\left\Vert P_{0} E\left(\ket{\xi } \otimes \ket{0}\right) -\ket{\tau } \otimes \ket{0}\right\Vert /\left\Vert \ket{\xi }\right\Vert  \notag\\
=\left\Vert P_{0} E\left(\ket{\xi } \otimes \ket{0} +T^{\prime -1/2}\ket{\nu } \otimes \ket{1}\right) -P_{0} E\left( T^{\prime -1/2}\ket{\nu } \otimes \ket{1}\right) -\ket{\tau } \otimes \ket{0}\right\Vert /\left\Vert \ket{\xi }\right\Vert  \notag\\
=\left\Vert \ket{\tau } \otimes \ket{0} -P_{0} E\left( T^{\prime -1/2}\ket{\nu } \otimes \ket{1}\right) -\ket{\tau } \otimes \ket{0}\right\Vert /\left\Vert \ket{\xi }\right\Vert  \notag\\
=T^{\prime -1/2}\left\Vert -P_{0} E\left(\ket{\nu } \otimes \ket{1}\right)\right\Vert \leq T^{\prime -1/2}\sqrt{\mathrm{tr}\overline{\pi }} /\left\Vert \ket{\xi }\right\Vert \label{eq:bounderror},
\end{gather}
where we used subunitarity of $E$ and $\displaystyle P_{0}$ in the last line. In fact, a similar argument would show that the norm difference would be at most twice that if we were not allowed to throw away part of the state in the last step. For an optimal $\displaystyle \overline{\pi } ,$ $\displaystyle \sqrt{\mathrm{tr}\overline{\pi }} /\left\Vert \ket{\xi }\right\Vert =\sqrt{\mathrm{Adv}\left(\ket{\xi } \mapsto \ket{\tau }\right)} .$

For large $\displaystyle T',$ each individual step puts most of its amplitude into the $\displaystyle \ket{\mathrm{idle}}$ subspace. Remarkably, one can show (proof omitted) that for the algorithm's intermediate states $\displaystyle \ket{\Phi ^{j}} ,$ the quantity $\displaystyle \sum _{j=0}^{T'-1}\bra{\Phi ^{j}} I-P_{\mathrm{idle}}\ket{\Phi ^{j}} /\bra{\xi }\ket{\xi } \leq \mathrm{tr\overline{\pi } /\bra{\xi }\ket{\xi }}$ independent of $\displaystyle T'.$ The analogue for query problems - called "Las Vegas complexity" - is defined and studied in \cite{LasVegas}.

In conclusion:
\begin{thm}~
\begin{enumerate}
\item A control algorithm converting $\displaystyle \ket{\xi }$ to $\displaystyle \ket{\tau }$ uses at least $\displaystyle \mathrm{Adv}\left(\ket{\xi }\rightarrow \ket{\tau }\right)$ steps,
\item Conversely, for any acceptable error $\displaystyle \varepsilon $, we can find an algorithm converting 
$\displaystyle \ket{\xi } \otimes \ket{0}$ to $\displaystyle \ket{\tau '} \otimes \ket{0} +\ket{\Delta } \otimes \ket{1}$, with $\displaystyle \left\Vert \ket{\tau '} -\ket{\tau }\right\Vert /\left\Vert \ket{\xi }\right\Vert \leq \varepsilon ,$ that takes\begin{equation}
T' =\left\lceil \frac{\mathrm{Adv}\left(\ket{\xi }\rightarrow \ket{\tau} \right)}{\varepsilon ^{2}}\right\rceil 
\end{equation}
steps and fulfills $\displaystyle \sum _{j=0}^{T-1}\bra{\Phi ^{j}} I-P_{\mathrm{idle}}\ket{\Phi ^{j}} /\bra{\xi }\ket{\xi } \leq \mathrm{Adv}\left(\ket{\xi }\rightarrow \ket{\tau }\right)$ on the intermediate states.
\end{enumerate}
\end{thm}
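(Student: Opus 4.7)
My plan is to observe that both parts of the theorem essentially consolidate material developed earlier, so the proof should be a short assembly rather than a new argument.

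For part (1), the lower bound is immediate from the discussion preceding Definition \ref{def:adversary}. Given any $T$-step algorithm converting $\ket{\xi}\to\ket{\tau}$, Proposition \ref{prop:densitymatrixsequences_equal_queryalgorithms} provides the sequence $\pi^0,\dots,\pi^{T-1}$, the sum $\overline{\pi}=\sum_j\pi^j$ is feasible for the primal (Equation \ref{eq:primal_filter} follows from the telescoping in Equation \ref{eq:evol_pibar}, and $\overline{\pi}\succeq 0$ is automatic), and Inequality \ref{eq:boundtrpibar} gives $\mathrm{tr}(\overline{\pi})/\braket{\xi}\leq T$. Hence the primal's optimum, i.e.\ $\mathrm{Adv}(\ket{\xi}\to\ket{\tau})$, is at most $T$.

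For part (2), I would pick an optimal feasible $\overline{\pi}$ so that $\mathrm{tr}(\overline{\pi})/\braket{\xi}=\mathrm{Adv}(\ket{\xi}\to\ket{\tau})$, choose $T'=\lceil \mathrm{Adv}/\varepsilon^2\rceil$, and apply the preceding proposition to obtain a $T'$-step algorithm that exactly implements the perturbed conversion $\ket{\xi}\otimes\ket{0}+T'^{-1/2}\ket{v}\otimes\ket{1}\mapsto \ket{\tau}\otimes\ket{0}+T'^{-1/2}\ket{v}\otimes\ket{1}$. To obtain an algorithm acting on the unperturbed input $\ket{\xi}\otimes\ket{0}$, I invoke linearity: the difference between the output produced by the algorithm when fed $\ket{\xi}\otimes\ket{0}$ versus the perturbed input is exactly what Equation \ref{eq:bounderror} controls. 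Projecting onto the $\ket{0}$ flag subspace after the computation and writing the output as $\ket{\tau'}\otimes\ket{0}+\ket{\Delta}\otimes\ket{1}$, that same calculation bounds
\begin{equation*}
\|\ket{\tau'}-\ket{\tau}\|/\|\ket{\xi}\| \leq T'^{-1/2}\sqrt{\mathrm{tr}\,\overline{\pi}}/\|\ket{\xi}\| = \sqrt{\mathrm{Adv}(\ket{\xi}\to\ket{\tau})/T'} \leq \varepsilon,
\end{equation*}
which is exactly the claimed accuracy.

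The main (small) obstacle is the Las Vegas-type bound $\sum_{j}\bra{\Phi^j}I-P_{\mathrm{idle}}\ket{\Phi^j}/\braket{\xi}\leq\mathrm{Adv}$: the text explicitly says the proof of this inequality is omitted, so I would simply cite the remark preceding the theorem (which asserts it in the form $\leq \mathrm{tr}\,\overline{\pi}/\braket{\xi}$), note that for our optimal choice of $\overline{\pi}$ this quantity equals $\mathrm{Adv}(\ket{\xi}\to\ket{\tau})$, and defer the detailed argument to the companion paper \cite{LasVegas}. All other ingredients are routine bookkeeping on top of the preceding proposition, so the proof reduces to little more than collecting these pieces.
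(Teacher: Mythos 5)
Your proposal is correct and follows essentially the same route as the paper, which indeed offers no separate proof of the theorem but presents it as a summary of the preceding discussion: part (1) from the feasibility of $\overline{\pi}=\sum_j \pi^j$ together with Inequality \ref{eq:boundtrpibar}, and part (2) from the universal-algorithm proposition combined with the error estimate in Equation \ref{eq:bounderror} and the choice $T'=\lceil \mathrm{Adv}/\varepsilon^2\rceil$. Your handling of the idle-subspace bound (citing it as asserted with proof omitted, specialized to an optimal $\overline{\pi}$) also matches the paper's treatment.
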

As remarked in the abstract, this algorithm corresponds to going along a straight line with constant velocity in the space of reduced density operators. As $\displaystyle \overline{\pi }$ is not "used up" during this transformation, we can interpret it as a "catalyst" in the spirit of catalytic states in LOCC transformations (see \cite{quantum_catalyst}).
\section{Further remarks}
\subsection{From control to query algorithms}\label{sec:control-to-query}
I briefly discuss how to modify this argument for quantum query complexity problems in state conversion problems; I skipped this before for simplicity. This note completely ignores function evaluation and output conditions - i.e. the question of what final states allow calculating some function of the input in a query problem. See e.g. \cite{LasVegas}, \cite{belovs2015variations}, \cite{Barnum2007Semidefinite} for a more thorough discussion of query complexity problems.

Start directly after Equation \ref{eq:evol_pibar}. Let $\displaystyle P_{\mathcal{A} '}$ be a projector onto a subspace $\displaystyle \mathcal{A} '\subseteq \mathcal{A}$ such that $\displaystyle P_{\mathcal{A} '} L=LP_{\mathcal{A} '} .$ Choosing $\displaystyle \mathcal{A'=A}$ and $\displaystyle P_{\mathcal{A} '} =I$ will always work; when dealing with a query problem and $\displaystyle L$ is block-diagonal in some basis $\displaystyle \left\{\ket{a}\right\}_{a\in A}$ of $\displaystyle \mathcal{A}$, we could choose $\displaystyle \mathcal{A} ':=\mathrm{span} \left\{ \ket{a} \right\}$ as well for any $\displaystyle a\in A.$ The argument that shows $\displaystyle \mathrm{tr}(\overline{\pi }) \leq \ T\bra{\xi }\ket{\xi }$ (Inequality \ref{eq:boundtrpibar}) is in fact sufficient to show that
\begin{equation}
\mathrm{tr}( P_{\mathcal{A} '}\overline{\pi }) \leq \ T\ \bra{\xi } P_{\mathcal{A} '}\ket{\xi }
\end{equation}

for any such $\displaystyle \mathcal{A} ',$ because we can commute $\displaystyle P_{\mathcal{A} '}$ through the entire evolution.

So each suitable $\displaystyle \mathcal{A} '$ yields a lower bound on $\displaystyle T,$ and we can replace the optimization target $\displaystyle \mathrm{tr}\overline{\pi }/\bra{\xi }\ket{\xi }$ given in Definition \ref{def:adversary} by
\begin{equation}
\underset{\mathcal{A} '\subseteq \mathcal{A}\colon P_{\mathcal{A} '} L=LP_{\mathcal{A} '}, P_{\mathcal{A} '}\ket{\xi}\neq 0}{\mathrm{sup}} \ \left(\mathrm{tr}( P_{\mathcal{A} '}\overline{\pi }) /\bra{\xi } P_{\mathcal{A} '}\ket{\xi }\right) 
\end{equation}
and add the constraint
\begin{equation}
\mathrm{tr}\left( P_{\mathcal{A'}} \overline{\pi}\right) = 0
\end{equation}
for any $\mathcal{A}'$ such that $P_{\mathcal{A}'}\ket{\xi}=0.$

We can also fix a set of $\displaystyle \mathcal{D'}$ that fit, and consider the optimization problem that considers only these. For a block-diagonal $\displaystyle L$ as above and $\left\Vert P_a \ket{\xi} \right\Vert=1$ for all $a,$ this results in an optimization problem equivalent to the unidirectional relative $\displaystyle \gamma _{2}$-bound of \cite{LasVegas}.

Conversely, suppose we have a optimal solution of that modified optimization problem. Then we can insert any $\displaystyle P_{\mathcal{A} '}$ with $P_{\mathcal{A} '}\ket{\xi}\neq 0$ into the derivation of Inequality \ref{eq:bounderror}. Using the fact that it commutes with all operators involved in that derivation, we derive that
\begin{equation}
\frac{\left\Vert P_{\mathcal{A} '} P_{0} A\left(\ket{\xi } \otimes \ket{0}\right) -P_{\mathcal{A} '}\ket{\tau } \otimes \ket{0}\right\Vert}{\left\Vert P_{\mathcal{A} '}\ket{\xi }\right\Vert} \leq \sqrt{\frac{\mathrm{Adv}\left(\ket{\xi } \mapsto \ket{\tau }\right)}{T'}}
\end{equation}
for each individual $\displaystyle P_{\mathcal{A} '} ,$ rather than just $\displaystyle P_{\mathcal{A} '} =I.$ In , this allows us to consider the error bound $\displaystyle \left\Vert \ket{\tau '} -\ket{\tau }\right\Vert /\left\Vert \ket{\xi }\right\Vert \leq \varepsilon $ with $\displaystyle \left\Vert P_{\mathcal{A} '}\left(\ket{\tau '} -\ket{\tau }\right)\right\Vert /\left\Vert P_{\mathcal{A} '}\ket{\xi }\right\Vert \leq \varepsilon $ for each individual ones. If we have considered a query problem as a control problem as in the abstract, and want to ensure that the error in the state conversion is small for all possible inputs, such a strengthening is necessary.
\subsection{Continuous time}
This note discusses everything in discrete time; however, quantum physics as we know it is continuous and described by differential equations. In a physical system, the interaction between $\displaystyle \mathcal{A}$ and $\displaystyle \mathcal{B}$ would be described by a Hamiltonian $\displaystyle H;$ we may model a situation in which the wavefunction may decohere, and we stop considering the decohered parts, by choosing a non-Hermitian $\displaystyle H.$

One approach to bridging the gap is to choose $\displaystyle \epsilon  >0$ and consider a discrete-time query model with $\displaystyle L_{\epsilon } :=e^{-iH \epsilon } .$ Then $\displaystyle T$ steps correspond to an elapsed time $\displaystyle T\epsilon .$ Intuitively, the associated family of lower bounds and algorithms should converge to a description of the continuous-time situation as $\displaystyle \epsilon \mapsto 0^{+} .$ However, I didn't succeed in making all associated analysis statements rigorous.
\subsection{Other characterizations of quantum processes}
As mentioned, Section \ref{sec:quantum-feasible-rdms} is very similar to the semidefinite programming (SDP) characterization of quantum algorithms by \cite{BarnumQuantum, Barnum2007Semidefinite}. Incidentally, an SDP characterization of the success probability is also possible if the transformations aren't subunitaries, but arbitrary quantum channels between mixed states (e.g. because they introduce errors). This can be done by an application of the frameworks developed independently in \cite{Gutoski2007Toward, Chiribella2009Theoretical}. However, the matrix size necessary here is exponential in $\displaystyle T.$

In continuous time, \cite{Khaneja2001Time} discuss time-optimal control in a still more general setting based on the Pontryagin maximum principle.
\section{Conclusion and outlook}
The main novelty in this note is the universal algorithm, which is simpler and more general than the previous one based on phase detection \cite{Lee2011Quantum, belovs2015variations} and shaves another factor of $\Theta\left(\log(1/\varepsilon)\right)$ off the runtime. The way we obtained this algorithm, and proved its correctness, is also unusual:
\begin{itemize}
\item Instead of specifying gates and families of states directly, we considered all inputs at once in an associated control problem and feasible ways to manipulate reduced density matrices involving these inputs,
\item Instead of proving correctness starting with the correct initial state, and showing that the final state is not too wrong after application of the algorithm, we started with a slightly wrong initial state, and proved that the final state will be correct when applied to that modified state.
\end{itemize}

These ideas may be useful to devise other quantum algorithms.

A query-efficient algorithm doesn't necessarily translate into a gate-efficient one in the usual model of quantum complexity, as the algorithm's unitaries may be hard to construct. For example, the query complexity of the $\displaystyle k$-distinctness problem was characterized by Belovs in 2012 \cite{belovskdistinctness} using the adversary method, but an algorithm matching this complexity (up to a polylogarithmic factor) was only presented in 2022 by Jeffery and Zur \cite{Jeffery2022Multidimensional}. So it would be interesting to find conditions that $\displaystyle \overline{\pi }$ needs to fulfill so that the unitaries involved in the associated universal algorithm are efficiently representable.

Though it is not obvious from the presentation, perhaps the closest relative to the algorithm presented here is the adiabatic algorithm given by Brandeho and Roland \cite{brandeho2015universal} in the continuous-time setting. In particular, they use the idea of slowly moving from modified initial to modified target states as well, and their algorithm has the same $\Theta(\log(1/\varepsilon))$ speedup compared to \cite{Lee2011Quantum}. The essential difference to the algorithm presented here is that we avoid any error term for the intermediate steps of the computation --- while, for finite runtime, an adiabatic algorithm incurs a nonzero error during the computation as well. So, conversely, it may be worth investigating which Gram matrix evolutions more concrete adiabatic algorithms correspond to, and attempting to optimize them based on the results.
\section{Further references}
The adversary method for quantum query algorithms has evolved over multiple decades from the BBBV lower bound on Grover's search problem \cite{Bennett1997Strengths}; after being defined in \cite{ambainis2000quantum}, \cite{spalek2004all, 
hoyer2007negative,
Lee2011Quantum, reichardt2011reflections} were some contributions along the way. The previous most general, "state of the art" discussion is \cite{belovs2015variations}; a more pedagogical one in \cite{childs2017lecture}.
\bibliographystyle{plain}
\bibliography{main} 
\end{document}